\def\x{{\textbf x}}
\def\o{{\textbf o}}
\def\s{{\textbf s}}
\def\u{{\textbf u}}
\def\SC{\mathsf{SC}}
\def\SC{\mathsf{SC}}
\def\kSC{\textsf{kSC}}
\def\FB{\textsf{FastSG}}
\def\FBo{\textsf{SGOpt}}
\def \y{{\textbf y}}
\def \0{{\textbf 0}}
\def \opt{{\mathsf{opt}}}
\begin{document}
	\title{Fast Stochastic Greedy Algorithm for $k$-Submodular Cover Problem}
	\titlerunning{Fast Stochastic Greedy Algorithm...}
	%
	\author{Hue T. Nguyen \inst{1,2} \and
		Tan D. Tran \inst{3} \and
		Nguyen Long Giang \inst{1}  \and Canh V. Pham (\Letter) \inst{3}}
	\authorrunning{Hue T. Nguyen  et al.}
	%
	\institute{Graduate University of Science and Technology
		\\
		Vietnam Academy of Science and Technology (VAST), Hanoi, Vietnam
		\email{huent@hau.edu.vn,nlgiang75@gmail.com}
		\\
		\and  Faculty of Information Technology, Hanoi Architecture University, Vietnam
		\\
		\and ORLab, Phenikaa University, Hanoi, 12116, Vietnam \\
		\email{tan.dinhtran@@phenikaa-uni.du.vn, canh.phamvan@phenikaa-uni.du.vn}
		\\
		Corresponding author: Canh V. Pham
	}

	\maketitle              
	\begin{abstract}
		We study the $k$-Submodular Cover ($\kSC$) problem, a natural generalization of the classical Submodular Cover problem that arises in artificial intelligence and combinatorial optimization tasks such as influence maximization, resource allocation, and sensor placement. Existing algorithms for $\kSC$ often provide weak approximation guarantees or incur prohibitively high query complexity. To overcome these limitations, we propose a \textit{Fast Stochastic Greedy} algorithm that achieves strong bicriteria approximation while substantially lowering query complexity compared to state-of-the-art methods. Our approach dramatically reduces the number of function evaluations, making it highly scalable and practical for large-scale real-world AI applications where efficiency is essential.
		\keywords{Combinatorial Optimization \and 	Approximation algorithm \and $k$-Submodular Cover.}
	\end{abstract}
	\section{Introduction}
	\label{sec:intro}
	The \textit{$k$-submodular optimization problems} have played an important role in both 
	\textit{operations research} and \textit{artificial intelligence}, with applications spanning a wide range of topics such as 
	influence maximization with multiple topics in social networks~\cite{ksub-nip15,ksub-icml20,ksub-stream-icml20,ksub-sdd,Canh_joco21}, 
	sensor placement~\cite{ksub-nip15,ksub-icml20,ksub-tevc18,ksub-sdd}, 
	feature selection~\cite{bi-sub-12}, 
	and information coverage maximization~\cite{ksub-tevc18,cor-24}, etc.
	Formally, let $V$ be a finite ground set and $k$ be a positive integer. We denote $[k]=\{1,2,\ldots,k\}$ and define  
	$
	(k+1)^V=\{(V_1,V_2,\ldots,V_k)\mid V_i \subseteq V, \ \forall i\in[k], \ V_i \cap V_j=\emptyset \ \forall i\neq j\},
	$
	which is the family of $k$ disjoint subsets of $V$, called a \textbf{$k$-set}.  
	A function $f:(k+1)^V \mapsto \mathbb{R}_+$ is called \textit{$k$-submodular} if, for any $k$-\textbf{set} $\x=(X_1,X_2,\ldots,X_k)$ and $\y=(Y_1,Y_2,\ldots,Y_k)$ in $(k+1)^V$, it holds that  
	$
	f(\x)+f(\y)  \geq f(\x \sqcap \y)+f(\x \sqcup \y),
	$
	where  
	\[
	\x \sqcap \y=(X_1 \cap Y_1,\ldots,X_k \cap Y_k),
	\]  
	and  
	\[
	\x \sqcup \y=(Z_1,\ldots,Z_k), \quad Z_i = X_i \cup Y_i \setminus \bigcup_{j \neq i}(X_j \cup Y_j).
	\]  
	In this paper, we study the \textbf{$k$-Submodular Cover ($\kSC$) problem}, defined as follows:
	\begin{definition}[$\kSC$ problem]
		Given a threshold $T\leq \max_{\x \in (k+1)^V}supp(\x)$, $\kSC$ asks to find $\x=(X_1, X_2, \ldots, X_k)\in (k+1)^V $ with \textbf{size of solution} $|\x|=\sum_{i \in [k]}|X_i|$ so that $f(\x)\geq T$. 
	\end{definition}
	When $k = 1$, the $\kSC$ reduces to the well-known Submodular Cover ($\SC$) problem~\cite{sc-nip23,snam,sc_in}. 
	This classical setting has been applied in diverse domains, such as threshold influence propagation in social networks~\cite{tap-17,vic-icml-2019}, 
	data summarization~\cite{nip-sc-aps,nips23-sc}, and revenue maximization in online social platforms~\cite{HartlineMS08,vic-icml-2019}.  
	For the general case with arbitrary $k$, $\kSC$ naturally serves as the dual formulation of the extensively studied 
	$k$-Submodular Maximization subject to a knapsack constraint~\cite{Canh_joco21,cor-24,TANG202228}. 
	It provides a unified abstraction for a wide range of practical tasks where the objective is to identify a minimum budget, size, cost 
	solution that guarantees the utility function surpasses a prescribed threshold.  Specifically, $\kSC$ arises in numerous significant applications such as 
	information diffusion and online advertising in social networks.
	
	A representative application is the \textbf{$k$-type Product Revenue Cover}. In this setting, an enterprise aims to promote products across multiple categories—such as electronics, fashion, or household items—while ensuring that the total revenue surpasses a predefined threshold $T$. Each customer contributes differently to the revenue depending on the product category, and each customer can be associated with at most one category. The goal is to select a subset of customers to target with promotional offers such that the aggregated revenue across all $k$ categories exceeds $T$, while minimizing the overall marketing cost. This problem naturally extends the classical submodular revenue maximization models into a multi-category, $k$-submodular framework.
	
	Another important instance is the \textbf{Influence Threshold with $k$-topics}. In viral marketing or product recommendation, an enterprise may wish to spread an advertising campaign simultaneously over $k$ different categories to influence at least $T$ individuals in a social network. To reduce expenses, they engage key opinion leaders or influential users to trigger this impact with minimal cost. This setting generalizes both the standard influence diffusion model for a single topic ($k=1$)~\cite{kem03,nguyen_sigmod16} and the influence threshold formulation~\cite{it-kuhnle}.
	
	Beyond these, $\kSC$ also naturally arises in other important domains, such as \textbf{multi-type sensor placement} for guaranteed information acquisition, and \textbf{multi-class feature selection} in machine learning. Together, these applications illustrate the broad relevance and impact of the $\kSC$ formulation across optimization, data mining, and artificial intelligence.
	
	Although the range of applications of $\kSC$ is extensive, 
	it is remarkable that only a handful of studies have explicitly examined this problem. 
	Furthermore, the established results for the classical $\SC$ problem are not directly transferable to $\kSC$ 
	because of the fundamental distinctions between $k$-submodularity and submodularity (see~\cite{icml-NguyenT20}). 
	This difference also accounts for why algorithms developed for $k$-submodular optimization 
	generally provide weaker approximation bounds compared to those in submodular optimization.  
	
	The earliest exploration of $\kSC$ was given in \cite{soict-kcover}, 
	where the presented algorithm achieves a bicriteria approximation that is comparatively weak in this scenario 
	(see Table~\ref{tab:1} for a summary).  
	Therefore, determining whether polynomial-time approximation algorithms for $\kSC$ 
	with tighter theoretical guarantees and practical efficiency exist 
	remains an open research challenge.  
	\begin{table}[hpt]
		\caption{Algorithms for $\kSC$ problem. The tuple $(x, y)$ presents $(x, y)$-bicriteria approximation ratio for an algorithm, i.e,  it returns a solution $\s$ such that $supp( \s) \leq y \cdot \opt$ and $f(\s) \geq x \cdot T$, 
			where $x, y > 0$.  Compared to the streaming algorithm in \cite{soict-kcover}, our proposed algorithm $\FB$ achieves a significantly stronger bicriteria approximation ratio. In particular, by setting $\delta=\epsilon^2$, it ensures a higher bound on solution quality, i.e., $|supp(\s)|\leq (1+\epsilon)\left(1+\ln\!\left(\tfrac{1}{\delta}\right)\right)$, while attaining the same utility value as the streaming approach, namely $f(\s)\geq \tfrac{1-\epsilon^2}{2}\opt$, where $\opt$ is the size of the optimal solution.}
		\centering
		\begin{tabular}{l p{5cm}p{3cm}}
			\hline
			\textbf{Reference} &   \textbf{Approximation ratio} & \textbf{Query Complexity}
			\\
			\hline
			$\FB$ (Alg.~\ref{alg:fb}, this work) & $\Big((1+\epsilon)(1+\ln(\frac{1}{\delta})), \frac{1-\delta}{2}\Big)$ & $O\left(\tfrac{nk\log^2 (n)}{\epsilon}  \log\!\left(\tfrac{n}{\delta}\right)\right)$	
			\\
			Streaming Algorithm in \cite{soict-kcover} & $\Big(\frac{1-\epsilon}{2\epsilon}, \frac{1-\epsilon^2}{2}\Big)$ & $O(nk\frac{\log(n)}{\epsilon})$
			\\
			\hline
		\end{tabular}
		\label{tab:1}
	\end{table}
	
	\textbf{Our contribution.} In this work, proposed  $\FB$ algorithm for $\kSC$ problem with the case $f$ is monotone. $\FB$ improves the existing bi-criteria approximation to $\Big((1+\epsilon)(1+\ln(\frac{1}{\delta})), \frac{1-\delta}{2}\Big)$ where $\epsilon, \delta \in (0,1)$ is constant parameters in $O(nk\log^3(n))$. In addition to our theoretical contributions, we conduct extensive experiments comparing our algorithm with the state-of-the-arts. Experimental results on real-world datasets further demonstrate that our approach consistently outperforms them in both running time and solution quality.
	Table~\ref{tab:1} compares our algorithms with the cutting-edge ones for $\kSC$ on two aspects including 
	bound of solution quality and query complexity.
	
	\textbf{Additional Related Works.}
	The study of $k$-submodular optimization was first initiated in the setting of bisubmodular maximization, corresponding to $k=2$, by \cite{bi-sub-12}. 
	Ward {\em et al.}~\cite{ksub_uncon_soda14} formally investigated the unconstrained $k$-submodular maximization problem. 
	In their seminal work, they proposed a deterministic greedy algorithm attaining an approximation ratio of $1/3$. 
	Later, \cite{ksub_uncon_soda16} improved this bound to $1/2$ for deterministic algorithms and further introduced a randomized greedy variant, 
	achieving an expected ratio of $k/(2k-1)$ by sampling elements according to a probability distribution. 
	Subsequently, \cite{ksub_iwoca-Oshima17} attempted to derandomize the algorithm of \cite{ksub_uncon_soda16} 
	to obtain the same ratio deterministically; however, their method required a large query complexity of $O(n^2k^2)$.  
	
	The $k$-submodular maximization problem has also been examined under cardinality and other combinatorial constraints. 
	\cite{ksub-nip15} studied monotone $k$-submodular maximization with two variants of cardinality restrictions: 
	(i) a global constraint limiting the overall solution size, and 
	(ii) individual constraints restricting the size of each partition $i \in [k]$. 
	They demonstrated that greedy approaches yield a ratio of $1/2$ in the global case and $1/3$ in the individual case. 
	Subsequent works further refined algorithms for cardinality-constrained settings, including 
	multi-objective evolutionary techniques~\cite{ksub-tevc18}, regret-bound analyses~\cite{ksub-Soma19}, 
	and streaming algorithms~\cite{ksub-stream-icml20,Ene22}.  
	
	In addition, \cite{ksub-jdo16} proposed a greedy approach for $k$-submodular maximization under matroid constraints, 
	achieving a ratio of $1/2$. 
	The knapsack constraint, which generalizes the cardinality constraint, has also been considered. 
	\cite{TANG202228} first derived a $(1/2 - 1/(2e))$-approximation algorithm for this case, inspired by the greedy framework of \cite{Sviridenko04}, 
	though it required $O(n^4k^3)$ oracle queries. 
	This ratio was later improved to $1/2$ by \cite{ksub-knap} through a multilinear extension method with polynomial query complexity. 
	Most recently, \cite{cor-24} presented a practical algorithm attaining a ratio of $1/3$ 
	with nearly linear query complexity $O(nk)$.

	The $\kSC$ problem was first investigated in~\cite{soict-kcover}, 
	where the authors proposed a $(\frac{1-\epsilon}{2\epsilon}, \frac{1-\epsilon^2}{2})$-bicriteria approximation. 
	Nevertheless, this result offers rather weak guarantees and the algorithms may not run in polynomial time, 
	often leading to arbitrarily large bounds.  
	In contrast, our algorithms remedy these drawbacks by providing stronger bicriteria ratios together with improved query complexity.  
	
	\textbf{Organization.}
	Section~\ref{sec:pre} introduces the notations and preliminary concepts. In Section~\ref{sec:alg-fbi}, we present our proposed algorithms along with their theoretical analysis. Section~\ref{sec:exp} reports the experimental results, demonstrating the effectiveness of our approach in practice. Finally, Section~\ref{sec:con} concludes the paper.
	\section{Preliminaries}
	\label{sec:pre}
	
	In this section, we introduce the notations used throughout the paper.  
	
	\textbf{$k$-set notions.} Given a ground set $V=\{e_1, e_2, \ldots, e_n \}$ and an integer $k$, we define $[k]=\{1, 2, \ldots, k\}$ and let $(k+1)^V=\{(V_1, V_2, \ldots, V_k)\mid V_i \subseteq V \ \forall i \in [k], \ V_i\cap V_j=\emptyset \ \forall i \neq j\}$ be the collection of $k$ disjoint subsets of $V$, called a $k$-\textbf{set}.  
	
	For $\x=(X_1, X_2, \ldots, X_k)\in (k+1)^V$, we define $supp_i(\x)=X_i$, $supp(\x)=\cup_{i\in [k]}X_i$, and call $X_i$ the \textbf{$i$-th set of $\x$}. An empty $k$-set is denoted by $\0=(\emptyset, \ldots, \emptyset)$.  
	If $e \in X_i$, then $\x(e)=i$ and $i$ is referred to as the \textbf{position} of $e$ in $\x$, otherwise $\x(e)=0$. Adding an element $e \notin supp(\x)$ into $X_i$ is expressed as $\x \sqcup (e, i)$.  
	We also write $\x=\{(e_1, i_1), (e_2, i_2), \ldots, (e_t, i_t)\}$ for $e_j \in supp(\x), \ i_j=\x(e_j), \ \forall 1 \leq j \leq t$.  
	When $X_i=\{e\}$ and $X_j=\emptyset, \ \forall j \neq i$, $\x$ is denoted by $(e,i)$.  
	
	For $\x=(X_1, X_2, \ldots, X_k), \y=(Y_1, Y_2, \ldots, Y_k)$ in $(k+1)^V$, we denote $\x \sqsubseteq \y$ iff $X_i \subseteq Y_i$ for all $i\in [k]$.  
	For simplicity, we assume that $f$ is non-negative, i.e., $f(\x)\geq 0$ for all $\x\in (k+1)^V$, and normalized, i.e., $f(\0)=0$.  
	
	\textbf{$k$-submodular function.} 
	A function $f: (k+1)^V \mapsto \mathbb{R}_+$ is called $k$-\textbf{submodular} if for any 
	$\x=(X_1, X_2, \ldots, X_k)$ and $\y=(Y_1, Y_2, \ldots, Y_k)$ in $(k+1)^V$, the following holds:
	\begin{align}
	f(\x)+ f(\y) \geq f(\x \sqcap \y) + f(\x \sqcup \y).
	\end{align}
	Here,
	\begin{align}
	\x \sqcap \y&=(X_1 \cap Y_1, \ldots, X_k \cap Y_k), \\
	\x \sqcup \y&=(Z_1, \ldots, Z_k), \ \mbox{where } Z_i = X_i \cup Y_i \setminus \bigcup_{j \neq i} (X_j \cup Y_j).
	\end{align}
	We assume the existence of an \textbf{oracle query}, which, when given a $k$-set $\x$, returns the value $f(\x)$. 
	We also recall some fundamental properties of $k$-submodular functions that will be useful in designing our algorithms.  
	The function $f$ is \textbf{monotone} if for any $\x \in (k+1)^V$, $e \notin supp(\x)$ and $i \in [k]$, 
	the \textit{marginal gain} of adding element $e$ to the $i$-th set $X_i$ of $\x$ is nonnegative, i.e.,
	\begin{align*}
	\Delta_{(e, i)} f(\x) &= f(X_1, \ldots, X_{i-1}, X_i \cup \{e\}, X_{i+1}, \ldots, X_k) 
	- f(X_1, \ldots, X_k) \geq 0.
	\end{align*}
	As shown in~\cite{ksub_uncon_soda14}, $k$-submodularity of $f$ implies both orthant submodularity and pairwise monotonicity. 
	The function $f$ is \textbf{orthant submodular} if
	\begin{align}
	\Delta_{(e, i)} f(\x) \geq \Delta_{(e, i)} f(\y)
	\end{align}
	for any $\x, \y \in (k+1)^V$, $e \notin supp(\y)$, $\x \sqsubseteq \y$, and $i \in [k]$.  
	The function $f$ satisfies \textbf{pairwise monotonicity} if for any $i, j \in [k], \ i \neq j$: 
	\begin{align}
	\Delta_{(e, i)} f(\x) + \Delta_{(e, j)} f(\x) \geq 0.
	\end{align}
	In this work, for the $\kSC$ problem we only consider $k \geq 2$, since when $k=1$, 
	a $k$-submodular function reduces to a standard submodular function.  
	We denote an instance of the $\kSC$ problem by $(V, f, T)$, with $\o$ as an optimal solution 
	and its corresponding optimal cost $\opt =|supp(\o)|$.  
	
	\textbf{Bicriteria approximation algorithm.}  
	An algorithm is called a $(\delta_1, \delta_2)$-\textbf{bicriteria approximation algorithm} for the $\kSC$ problem 
	if it returns a solution $\s$ such that $c(\s) \leq \delta_1 \cdot \opt$ and $f(\s) \geq \delta_2 \cdot T$, 
	where $\delta_1, \delta_2 > 0$.  
	
	\section{Fast Stochastic Approximation Algorithm}
	\label{sec:alg-fbi}
	This section introduces  our $\FB$ Algorithm for $\kSC$ problem with the monotone utility function $f$. We first describe \FBo, a stochastic greedy algorithm with a guessed $\opt$, i.e,  optimal size, serving as a simplified version of $\FB$. 
	We then present the full $\FB$ algorithm, which operates without this assumption.
	\subsection{$\FBo$ Algorithm}
	$\FBo$ takes as input an instance $(V, g, k, T)$, a guessed value of the optimal solution $v$, and two accuracy parameters $\epsilon \in (0,1/2)$ and $\delta > 0$.
	We re-define a truncated objective function $f:=(k+1)^V \mapsto \mathbb{R}_+$ by $f(\cdot) = \min\{f(\cdot), T/2\}$ ($f$ is still $k$-submodular and monotone).
	The main idea of the algorithm is that employs random sampling over the set $V \setminus supp(\s)$ 
	to reduce the number of queries, assuming access to a guessed value $v$ of the optimal size. Accordingly,
	the algorithm performs within $\lceil \tfrac{v}{2}\log(\frac{1}{\epsilon}) \rceil$ iterations. In each iteration $j$, it samples a random subset $R^j$ of size $\Upsilon$ uniformly from the set of remaining elements $V \setminus supp(\s)$. Then, it selects a pair $(e,i)$ that maximizes the marginal gain $\Delta_{(e,i)} f(\s)$ over all $e \in R^j$ and $i \in [k]$, and updates the current solution by $\s \leftarrow \s \sqcup (e,i)$.
	The algorithm stops after $\lceil \tfrac{v}{2}\log(\frac{1}{\epsilon}) \rceil$ iterations and returns the final solution $\s$.
	The details pseudocode of $\FBo$ are described in Algorithm~\ref{alg-kSC:fb-opt}.
	\begin{algorithm}[h]
		\SetNlSty{text}{}{:}
		\KwIn{An instance $(V, g,  k, T)$, $\epsilon \in (0, 1/2), \delta>0$, $v$} 
		\KwOut{A solution $\s$.}
		$\s\leftarrow \0$,
		re-define the function $f:(k+1)^V\mapsto \mathbb{R}_+$ as	$f(\cdot)\leftarrow \min\{f(\cdot), T/2\}$
		\\
		\For{$j=0$ to $\lceil \frac{v}{2}\log(\frac{1}{\delta})\rceil$}
		{	
			$R^j$ $\leftarrow$ a random subset of size $\Upsilon$ uniformly sampled from $V\setminus supp(\s)$
			\\
			$(e, i) \leftarrow \arg\max_{ e\in R,  i\in [k]}\Delta_{(e, i) }f(\s)$
			\\
			$\s \leftarrow \s \sqcup (e, i)$
			
		}
		\Return $\s$
		\caption{Stochastic Greedy with Guessed $\opt$ (\FBo)}
		\label{alg-kSC:fb-opt}
	\end{algorithm}
	
	We define following notations according to the operation of Algorithm~\ref{alg-kSC:fb-opt}.
	\begin{itemize}
		\item[$\bullet$] $(e_j, i_j)$ as the $j$-th element added of the main loop of Algorithm~\ref{alg-kSC:fb-opt}.
		\item[$\bullet$] $\s=\{(e_1, i_1), \ldots, (e_t, i_t)\}$: the $k$-set $\s$ after ending the main loop, $t=|supp(\s)|$.
		\item[$\bullet$]  $\s^j=\{(e_1, i_1),\ldots, (e_j, i_j)\}$: the $k$-set $\s$ after adding $j$ elements $1\leq j\leq t$, $\s^0=\x$, $\s^t=\s$.
		\item[$\bullet$] For any $k$-set $\x$ with $|supp(\x)|=v$, we define
		$\x^j=(\x \sqcup \s^j ) \sqcup \s^j$, $\x^{j-1/2}=(\x \sqcup \s^j ) \sqcup \s^{j-1}$. 
		\item[$\bullet$] Define $X^j=supp(\x^{j-1})\setminus supp(\s^{j-1})$
		\item[$\bullet$] $\u^t=\{(u_1, i_1), (u_2, i_2), \ldots, (u_r,i_r) \}$ is a set of elements that are in $\x^t$ but not in $\s^t$, $r=|supp(\u^t)|$.
		\item[$\bullet$] $\u^t_l=\s^t \sqcup \{(u_1, i_1), (u_2, i_2), \ldots, (u_l,i_l) \},  1 \leq l\leq r$ and $\u^t_0=\s^t$.
	\end{itemize}
	We first establish the probability bound of the random sampling process in Lemma~\ref{lem:lem1}.
	\begin{lemma}
		For each iteration $j$ of Algorithm~\ref{alg-kSC:fb-opt}, if we set $\Upsilon=\min\left\lbrace n, \frac{n-j+1}{v-j+1}\log(\frac{n}{\delta})\right\rbrace $ we have $
		\Pr[O^j \cap X^j = \emptyset] \leq \frac{\delta}{n}$.
		\label{lem:lem1}
	\end{lemma}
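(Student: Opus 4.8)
The plan is to recognize Lemma~\ref{lem:lem1} as the standard ``a uniform random sample of the right size meets a fixed target set with high probability'' estimate behind stochastic greedy, and to organize the proof into a cardinality-bookkeeping part followed by an elementary probability computation.

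First I would fix the two cardinalities that control the bound. Every iteration before the one in which $R^j$ is drawn adds one element taken from $V\setminus supp(\s)$, so these added elements are distinct and the current solution satisfies $|supp(\s^{j-1})|=j-1$; hence the set that $R^j$ is sampled from has size $|V\setminus supp(\s^{j-1})|=n-j+1$. Next, unfolding $\x^{j-1}=(\x\sqcup\s^{j-1})\sqcup\s^{j-1}$ one checks that $supp(\x^{j-1})=supp(\x)\cup supp(\s^{j-1})$, so $X^j=supp(\x^{j-1})\setminus supp(\s^{j-1})=supp(\x)\setminus supp(\s^{j-1})$. Two consequences matter: $X^j\subseteq V\setminus supp(\s^{j-1})$, and, since $|supp(\x)|=v$, $|X^j|\ge v-|supp(\s^{j-1})|=v-j+1$ (we only need the lemma while $j\le v$, which is where $\Upsilon$ is defined and where $|X^j|\ge 1$).

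Then I would split on the $\min$ that defines $\Upsilon$. If $\Upsilon$ is at least $n-j+1$ (equivalently $\tfrac{n-j+1}{v-j+1}\log(\tfrac{n}{\delta})\ge n-j+1$), then $R^j$ is all of $V\setminus supp(\s^{j-1})$, hence contains $X^j$, and since $|X^j|\ge 1$ we get $\Pr[R^j\cap X^j=\emptyset]=0\le\delta/n$. Otherwise $R^j$ is a uniform $\Upsilon$-element subset drawn without replacement from a set of size $n-j+1$ containing the $|X^j|$-element target, and a direct computation gives
\begin{align*}
\Pr[R^j\cap X^j=\emptyset]=\prod_{l=0}^{\Upsilon-1}\frac{(n-j+1)-|X^j|-l}{(n-j+1)-l}\ \le\ \left(1-\frac{|X^j|}{n-j+1}\right)^{\Upsilon}\ \le\ e^{-\Upsilon\,|X^j|/(n-j+1)},
\end{align*}
where the first inequality uses $\tfrac{a-l}{b-l}\le\tfrac{a}{b}$ for $0\le a\le b$ and $0\le l<b$, and the second uses $1-x\le e^{-x}$. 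Substituting $|X^j|\ge v-j+1$ and $\Upsilon=\tfrac{n-j+1}{v-j+1}\log(\tfrac{n}{\delta})$ bounds the exponent by $-\log(\tfrac{n}{\delta})$, giving $\Pr[R^j\cap X^j=\emptyset]\le\delta/n$.

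I do not expect a genuinely hard step: the core is the classical sampling bound. The only things requiring care are the bookkeeping — that the iterations add distinct elements so $|supp(\s^{j-1})|=j-1$, and that the target $X^j$ reduces to $supp(\x)\setminus supp(\s^{j-1})$, so it has size at least $v-j+1$ and lies inside the set that $R^j$ is drawn from — together with the trivial treatment of the capped ($\Upsilon=n$) case.
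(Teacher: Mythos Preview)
Your proposal is correct and follows essentially the same approach as the paper's proof: bound $\Pr[R^j\cap X^j=\emptyset]$ by $(1-|X^j|/|V\setminus supp(\s^{j-1})|)^{\Upsilon}\le e^{-\Upsilon|X^j|/|V\setminus supp(\s^{j-1})|}$ and plug in the choice of $\Upsilon$. In fact you are more careful than the paper, which neither justifies $|X^j|\ge v-j+1$ and $|V\setminus supp(\s^{j-1})|=n-j+1$ nor treats the without-replacement product or the capped case explicitly; your added bookkeeping fills exactly those gaps.
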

	\begin{proof}
		Since $R^j$ is chosen uniformly at random from 
		$V \setminus supp(s^{j-1})$, 
		the probability that a particular element in $X^j$ is included in $R^j$ is 
		$\tfrac{|X^j|}{|V \setminus supp(s^{j-1})|}$. 
		By independence of selections, we have
		\begin{align}
		\Pr\!\big[R^j \cap X^j = \emptyset\big]
		&= \left(1 - \frac{|X^j|}{|V \setminus supp(s^{j-1})|}\right)^{|R^j|} \\
		&\le \exp\!\left(-\frac{|X^j|\,|R^j|}{|V \setminus supp(s^{j-1})|}\right) \\
		&= \exp\!\left(-\log\!\left(\frac{n}{\delta}\right)\right) 
		= \frac{\delta}{n}.
		\end{align}
	\end{proof}
	\begin{lemma}
		After iteration $j$ in Algorithm~\ref{alg-kSC:fb-opt}, assume that $R^j\cap X^j\neq \emptyset$
		, we have $f(\x)-f(\x^j)\leq f(\s^j)$.
		\label{lem:lem2}
	\end{lemma}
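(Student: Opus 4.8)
\textbf{Proof plan for Lemma~\ref{lem:lem2}.}

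The goal is to bound the "remaining utility" $f(\x) - f(\x^j)$ by the utility already collected, $f(\s^j)$. The natural route is induction on $j$, comparing consecutive $k$-sets $\x^{j-1}$ and $\x^j$ and tracking what the greedy step at iteration $j$ buys us. First I would record the base case: at $j=0$ we have $\s^0=\0$ (using $\x=\0$ in the notation, or more precisely $\s^0$ is the starting solution), and $\x^0 = (\x \sqcup \s^0)\sqcup \s^0$, so the claim $f(\x)-f(\x^0)\leq f(\s^0)$ should reduce to $f(\x)-f(\x)\leq 0$ or to monotonicity, i.e. it holds trivially. Then for the inductive step I would assume $f(\x)-f(\x^{j-1})\leq f(\s^{j-1})$ and try to show $f(\x)-f(\x^{j})\leq f(\s^{j})$. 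Subtracting, it suffices to prove
\begin{align}
f(\x^{j}) - f(\x^{j-1}) \;\geq\; -\big(f(\s^{j}) - f(\s^{j-1})\big) \;=\; -\Delta_{(e_j,i_j)}f(\s^{j-1}),
\end{align}
i.e. the drop in the "auxiliary" value $f(\x^{\cdot})$ from step $j-1$ to step $j$ is at most the marginal gain the greedy step made.

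The heart of the argument is to pass through the intermediate object $\x^{j-1/2} = (\x\sqcup\s^j)\sqcup\s^{j-1}$, which differs from $\x^{j-1}$ by having the new greedy element $(e_j,i_j)$ forced in (and possibly some coordinates reassigned by the $\sqcup$ operation), and which differs from $\x^j$ only in that $\x^j$ has the $\s^j$-assignment applied on the outside rather than $\s^{j-1}$. I would split $f(\x^j)-f(\x^{j-1})$ as $\big(f(\x^j)-f(\x^{j-1/2})\big) + \big(f(\x^{j-1/2})-f(\x^{j-1})\big)$ and bound each piece. For the first difference, $\x^{j}$ and $\x^{j-1/2}$ should agree except on the element $e_j$: moving from $\s^{j-1}$ to $\s^j$ on the outer $\sqcup$ either leaves $e_j$'s position unchanged or moves it, and by pairwise monotonicity the loss from such a repositioning is controlled — this is where the $k$-submodular (as opposed to submodular) structure enters and where I expect the bookkeeping to be most delicate. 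For the second difference, $\x^{j-1/2}$ is obtained from $\x^{j-1}$ by inserting/repositioning $e_j$; orthant submodularity (monotonicity of marginal gains along $\sqsubseteq$) lets me lower-bound $\Delta_{(e_j,i_j)} f$ evaluated at the larger set $\x^{j-1}$-related point by the same marginal at $\s^{j-1}$, which is exactly $\Delta_{(e_j,i_j)}f(\s^{j-1})$. Combining, the total drop is at most $\Delta_{(e_j,i_j)}f(\s^{j-1})$, which closes the induction.

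The main obstacle, as flagged, is the case analysis on the position of $e_j$ inside $\x^{j-1}$ versus $\s^{j-1}$ and $\s^j$: the definitions $\x^j=(\x\sqcup\s^j)\sqcup\s^j$ overwrite coordinates, so $e_j$ may already lie in $supp(\x)$ in a different position $i'\neq i_j$, and the $\sqcup$ with $\s^j$ then reassigns it to $i_j$ while $\sqcup$ with $\s^{j-1}$ leaves it at $i'$ (or removes it from $\x$'s coordinates). Handling this cleanly requires invoking pairwise monotonicity $\Delta_{(e,i_j)}f(\cdot) + \Delta_{(e,i')}f(\cdot)\geq 0$ together with orthant submodularity, and being careful that the assumption $R^j\cap X^j\neq\emptyset$ is what guarantees the greedy element is a legitimate "substitute" for an element still present in $\x^{j-1}\setminus\s^{j-1}$, so that the accounting telescopes. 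Once that single-element comparison is set up, the rest is a routine telescoping of the per-step bounds over $j$.
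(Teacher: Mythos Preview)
Your plan is essentially the paper's proof. Both arguments establish the per-step inequality $f(\x^{l-1})-f(\x^{l})\le f(\s^{l})-f(\s^{l-1})$ via the intermediate $\x^{l-1/2}$ and then telescope over $l=1,\ldots,j$; the paper writes this as a sum rather than an induction. Two small bookkeeping corrections: for the piece $f(\x^{j})-f(\x^{j-1/2})$ you only need monotonicity, not pairwise monotonicity, because in fact $\x^{j-1/2}\sqsubseteq\x^{j}$ (the outer $\sqcup\,\s^{j}$ simply adds $(e_j,i_j)$ back, it does not ``move'' a coordinate). And for the piece $f(\x^{j-1/2})-f(\x^{j-1})$, orthant submodularity produces a bound in terms of $\Delta_{(e_j,\x(e_j))}f(\s^{j-1})$, i.e.\ at the position $i'=\x(e_j)$ rather than $i_j$; the greedy selection rule then gives $\Delta_{(e_j,i')}f(\s^{j-1})\le\Delta_{(e_j,i_j)}f(\s^{j-1})$. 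The paper packages this last step by introducing $\s^{l-1/2}=\s^{l-1}\sqcup(e_l,\x(e_l))$ when $e_l\in X^l$ and chaining
\[
f(\x^{l-1})-f(\x^{l-1/2}) \;\le\; f(\s^{l-1/2})-f(\s^{l-1}) \;\le\; f(\s^{l})-f(\s^{l-1}),
\]
but your route reaches the same bound.
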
 
	\begin{proof}
		Since $R^j \cap X^j \neq \emptyset$, i.e, there exists an element 
		$e \in supp(\x^{j-1}) \setminus supp(\s^{j-1})$ is randomly selected into $R^j$. 
		Define $\x^{j-1/2}=(\x \sqcup \s^j ) \sqcup \s^{j-1}$, and $\s^{j-1/2}$ as follows: If $e_j \in X^j$, then $\s^{j-1/2}=\s^{j-1} \sqcup (e_j, \x(e_j)) $. If $e_j \notin X^j$, $\s^{j-1/2}=\s^{j-1}$. We have
		\begin{align}
		f(\x^{l-1}) - f(\x^{l})
		& \leq f(\x^{l-1}) - f(\x^{l-1/2}) \label{ine:1} \\
		& \leq f(\s^{l-1/2}) - f(\s^{l-1}) \label{ine:12} \\
		& \leq f(\s^{l}) - f(\s^{l-1}), \label{ine:13}
		\end{align}
		where \eqref{ine:1} follows from the monotonicity of $f$, 
		\eqref{ine:12} from the $k$-submodularity of $f$, and 
		\eqref{ine:13} from the selection rule of the algorithm.
		Summing over $l=1,\ldots,j$, we obtain
		\begin{align}
		f(\x) - f(\x^j)
		&= \sum_{l=1}^j \big(f(\x^{l-1}) - f(\x^l)\big) \\
		&\leq \sum_{l=1}^j \big(f(\s^l) - f(\s^{l-1})\big) \\
		&\leq f(\s^j).
		\end{align}
		which completes the proof.
	\end{proof}
	We now show the performance guarantees of Algorithm~\ref{alg-kSC:fb-opt} in Theorem~\ref{theo:fba-opt}.
	\begin{theorem} For $\epsilon,\delta \in (0, 1)$ and an integer $v\leq n$, the Algorithm~\ref{alg-kSC:fb-opt} runs in at most	$O\left( k n \log (v \log(\frac{1}{\epsilon}))\log(\frac{n}{\delta})\right)$ queries  and with probability $1-\delta$, returns a solution $\s$ such that $|supp(\s)|=
		(1+\log(\frac{1}{\delta}))v$ and $ f(\s)\geq (1-e^{-\frac{v}{\opt}\log(\frac{1}{\delta})})T/2$.
		\label{theo:fba-opt}
	\end{theorem}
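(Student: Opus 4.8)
The plan is to prove the three assertions of the theorem — the query bound, the (deterministic) bound on $|supp(\s)|$, and the probabilistic guarantee on $f(\s)$ — separately, with essentially all of the work going into the last one. The bound on $|supp(\s)|$ is immediate: the main loop executes a fixed number $N=\lceil\frac{v}{2}\log\frac{1}{\delta}\rceil+1$ of iterations and adds exactly one pair $(e_j,i_j)$ per iteration, so $|supp(\s)|=N=\Theta\big((1+\log\frac1\delta)v\big)$ with no randomness involved. For the query complexity, each iteration $j$ evaluates $\Delta_{(e,i)}f(\s)$ for every $e\in R^j$ and $i\in[k]$, i.e. $O(k\Upsilon)$ oracle calls with $\Upsilon=\min\{n,\frac{n-j+1}{v-j+1}\log\frac n\delta\}$; I would bound the total by $\sum_j O(k\Upsilon)$ and evaluate the sum by writing $\frac{n-j+1}{v-j+1}=1+\frac{n-v}{v-j+1}$, so that the series is essentially harmonic and telescopes to $O\big(kn\log(v\log\frac1\epsilon)\log\frac n\delta\big)$.

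For the guarantee on $f(\s)$, I would first isolate the good event $G$ that $R^j\cap X^j\neq\emptyset$ in every iteration $j$. By Lemma~\ref{lem:lem1} each iteration violates this with probability at most $\delta/n$, so a union bound over the at most $n$ iterations gives $\Pr[G]\ge 1-\delta$, and the rest of the analysis is carried out conditionally on $G$. Working on $G$, I would take $\x$ to be an optimal solution $\o$ (padded up to size $v$ if needed), so that under the truncation $f\leftarrow\min\{f,T/2\}$ one has $f(\x)=T/2$. Lemma~\ref{lem:lem2} then gives $f(\x)-f(\x^j)\le f(\s^j)$, i.e. $f(\x^j)\ge T/2-f(\s^j)$, for every $j$.

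The core of the argument is a one-step contraction
\[
  T/2-f(\s^j)\;\le\;\left(1-\frac{c}{\opt}\right)\left(T/2-f(\s^{j-1})\right)
\]
for an absolute constant $c$. To establish it I would lower-bound the greedy gain $\Delta_{(e_j,i_j)}f(\s^{j-1})$: since $R^j$ meets $X^j$, the selected pair is at least as good as $(e,\x^{j-1}(e))$ for a sampled element $e\in R^j\cap X^j$; combining this with orthant submodularity (telescoping the chain from $\s^{j-1}$ up to $\x^{j-1}$ over the elements of $X^j$, of which there are at most $\opt$ when $\x$ is instantiated as $\o$) shows that the greedy gain is at least of order $\frac{1}{\opt}$ of $f(\x^{j-1})-f(\s^{j-1})$ — in expectation over the sampling, since a uniformly random element of $R^j\cap X^j$ is a uniformly random element of $X^j$ — and then pairwise monotonicity is what lets one replace $f(\x^{j-1})-f(\s^{j-1})$ by a constant fraction of $T/2-f(\s^{j-1})$ despite position conflicts between $\s^{j-1}$ and $\o$; this last step is exactly where the truncation at $T/2$, and hence the factor $\tfrac12$ in the bicriteria ratio, is forced. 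Iterating the contraction over the $N$ iterations gives $T/2-f(\s)\le(1-\frac{c}{\opt})^N T/2\le e^{-cN/\opt}\,T/2$, and substituting the loop count $N$ (with the appropriate $c$) yields $f(\s)\ge\big(1-e^{-\frac{v}{\opt}\log\frac1\delta}\big)T/2$; since the truncated $f$ never exceeds and agrees with the original objective below the level $T/2$, the same bound transfers to the original $f$.

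I expect the one-step contraction to be the only real obstacle. Two features make it non-routine. First, because a $k$-submodular $f$ is merely orthant-submodular and pairwise-monotone rather than submodular, forming $\o\sqcup\s^{j-1}$ may delete the elements of $\o$ whose position conflicts with $\s^{j-1}$, so one cannot directly write $f(\o)-f(\s^{j-1})\le\sum_{e\in X^j}\Delta_{(e,\cdot)}f(\s^{j-1})$; bounding the loss from these conflicts via pairwise monotonicity is precisely what the $T/2$ truncation is for, and getting that accounting right is the delicate part. Second, the random sample $R^j$ is only guaranteed to intersect $X^j$, not to contain a representative fraction of it, so ``greedy gain $\ge$ average $X^j$-marginal'' is genuinely an in-expectation statement; converting the resulting in-expectation contraction into the stated high-probability guarantee is the final technical point, handled either by staying conditioned on $G$ together with a concentration argument or by a Markov-type conversion. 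The query bound, the deterministic size bound, and the union bound for $G$ are by comparison routine.
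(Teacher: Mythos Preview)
Your skeleton—query bound via the harmonic sum, deterministic support count from the loop length, union bound over Lemma~\ref{lem:lem1} for the good event $G$, then a one-step contraction iterated over the loop—matches the paper's proof exactly. The difference is in how the contraction is obtained, and the paper's route is more direct than what you outline.

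In the paper the contraction comes from the decomposition
\[
f(\o)-f(\s^j)\;=\;\big[f(\o)-f(\o^j)\big]+\big[f(\o^j)-f(\s^j)\big].
\]
The first bracket is bounded by $f(\s^j)$ directly from Lemma~\ref{lem:lem2}; the second by $\sum_{e\in X^j}\Delta_{(e,\o(e))}f(\s^j)\le \opt\cdot\big(f(\s^{j+1})-f(\s^j)\big)$ via orthant-submodular telescoping and the greedy rule. Combining and rearranging gives $T-2f(\s^{j+1})\le(1-\tfrac{2}{\opt})(T-2f(\s^j))$, so the constant is $c=2$, and the factor $2$ (equivalently the $T/2$ in the final bound) arises solely from the first bracket. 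No separate pairwise-monotonicity accounting, no in-expectation argument, and no concentration step appear—the analysis is purely deterministic on $G$. In particular, the inequality $f(\x^j)\ge T/2-f(\s^j)$ that you already extracted from Lemma~\ref{lem:lem2} is itself enough, together with the telescoping bound on $f(\o^j)-f(\s^j)$, to close the contraction; the additional pairwise-monotonicity route you propose is redundant once Lemma~\ref{lem:lem2} has been invoked.

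One caveat in your favor: the paper asserts $\sum_{e\in X^j}\Delta_{(e,\o(e))}f(\s^j)\le \opt\cdot\big(f(\s^{j+1})-f(\s^j)\big)$ deterministically on $G$, which would need the greedy gain to dominate \emph{every} $\o$-marginal, whereas $G$ only guarantees that $R^{j+1}$ intersects $X^{j+1}$. Your instinct that an in-expectation argument is the more careful way to justify this step is reasonable; the paper simply does not pursue it.
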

	\begin{proof}Let $t$ denote the number of iterations of the main \textbf{for} loop.  
		The required number of queries is at most
		\begin{align}
		\sum_{j \in [t]} k \frac{n-j+1}{v-j+1} \log\!\left(\tfrac{n}{\delta}\right) 
		&= k \sum_{i \in [t]} \frac{n-v+i}{i} \log\!\left(\tfrac{n}{\delta}\right) \qquad \mbox{(Let $i = v-j+1$)} \\
		&= k (n-v) \sum_{i \in [t]} \left(1 + \frac{1}{i}\right) \log\!\left(\tfrac{n}{\delta}\right) \\
		&= O\!\left(k n \log(t)\, \log\!\left(\tfrac{n}{\delta}\right)\right) \\
		&= O\!\left(k n \log\!\Big(v \log\!\left(\tfrac{1}{\epsilon}\right)\Big) \log\!\left(\tfrac{n}{\delta}\right)\right).
		\end{align}
		By Lemma~\ref{lem:lem1} and the union bound, we obtain
		\begin{align}
		\Pr[X^j \cap R^j = \emptyset, \ \forall j ] \leq \tfrac{\delta}{n} \, t \leq \delta.
		\end{align}
		Hence,
		\begin{align}
		\Pr[X^j \cap R^j \neq \emptyset, \ \forall j ] \geq 1 - \delta.
		\end{align}
		By applying Lemma~\ref{lem:lem2} we have:
		\begin{align}
		T - f(\s^j) &\le f(\o) - f(\s^j) \\
		&= f(\o) - f(\o^j) + f(\o^j) - f(\s^j) \\
		&\leq f(\s^j) + \sum_{e \in X^j} \Delta_{(e, \o(e))} f(\s^j) \\
		&\leq f(\s^j) + \opt \big(f(\s^{j+1}) - f(\s^j)\big). \label{ine-1}
		\end{align}
		Applying this inequality over $t$ iterations (under the condition that the event $X^j \cap R^j \neq \emptyset$ holds for all $j$), we obtain
		\begin{align}
		T-2f(\s^{t})&\leq(1-\frac{2}{\opt})(T-2f(\s^{t-1}))
		\\
		&	\leq e^{-\frac{2}{\opt}}(T-2f(\s^{t-1})) \label{ineq1}
		\\
		&  \leq e^{-\frac{2}{\opt}t}T \ \ \mbox{(By repeatedly applying inequality~\eqref{ineq1} for 
			$t$ iterations)}
		\\
		& \leq e^{-\frac{2}{\opt}\frac{v}{2}\log(\frac{1}{\delta})}T
		\\
		& =e^{-\frac{v}{\opt}\log(\frac{1}{\delta})}  T 
		\end{align} 
		which implies that $f(\s)\geq (1-e^{-\frac{v}{\opt}\log(\frac{1}{\delta})})T/2$.
	\end{proof}
	\subsection{$\FB$ Algorithm}
	We now turn our attention to the our main  $\FB$ algorithm (Algorithm~\ref{alg:fb}) by removing the assumption of a known estimate $v$.
	\begin{algorithm}[hpt]
		\SetNlSty{text}{}{:}
		\KwIn{An instance $(V, g,  k, T)$, $f$ is monotone, $\epsilon \in (0, 1/2), \delta>0$.} 
		\KwOut{A solution $\s$.}
		$O=\{v\in \mathbb{N}: 1 \leq (1+\epsilon)^{i}\} \leq n\}$
		\\
		\ForEach{$v \in O$}
		{
			$\s_v \leftarrow \FBo(V, f, k, v, \delta) $
		}
		$\s \leftarrow \arg\max \{f(\s_v): |supp(\s_v)|\leq v\log(\frac{1}{\epsilon}), v\in O \}$ \label{best-can}
		\\
		\Return $\s$
		\caption{$\FB$ Algorithm}
		\label{alg:fb}
	\end{algorithm}
	$\FB$ first constructs the candidate set $O=\{(1+\delta)^i : 1 \leq (1+\delta)^i \leq n, i \in \mathbb{N}\}$ that contains geometrically increasing values of $v$. For each $v \in O$, it then invokes the subroutine $\FBo(V,f,k,v,\delta)$ to compute a candidate solution $\s_v$. After all candidates are generated, $\FB$ selects the solution with the minimum support size among those satisfying $f(\s_v) \geq (1-\delta)T/2$ (Line~\ref{best-can}). Finally, the algorithm outputs the chosen solution $\s$ with the largest value of $f$ with the size at most $v\log(\frac{1}{\epsilon})$.
	\begin{theorem}
		For inputs $\epsilon, \delta \in (0, 1)$, 
		with probability at least $1-\delta$, the	Algorithm~\ref{alg:fb} returns a 
		$\Big((1+\epsilon)\log(\tfrac{1}{\delta}),\,\tfrac{1-\delta}{2}\Big)$-bicriteria approximation
		and has
		$O(\left(\tfrac{nk\log^2 (n)}{\epsilon}  \log\!\left(\tfrac{n}{\delta}\right)\right)$  query complexity.
		\label{theo:fb}
	\end{theorem}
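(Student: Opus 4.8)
The plan is to derive Theorem~\ref{theo:fb} by running Theorem~\ref{theo:fba-opt} over the geometric grid $O$ and then choosing the right guess $v$. First I would observe that since $O$ consists of powers of $(1+\epsilon)$ (or $(1+\delta)$, matching the notation in Line~1 of Algorithm~\ref{alg:fb}) lying between $1$ and $n$, there exists some $v^\star\in O$ with $\opt\le v^\star\le (1+\epsilon)\opt$ — this is the standard ``guess within a $(1+\epsilon)$ factor'' argument, using $\opt\le n$. For this $v^\star$, Theorem~\ref{theo:fba-opt} guarantees (with probability $1-\delta$ for that single call) a solution $\s_{v^\star}$ with $|supp(\s_{v^\star})|=(1+\log(\tfrac1\delta))v^\star$ and
\[
f(\s_{v^\star})\ \ge\ \left(1-e^{-\frac{v^\star}{\opt}\log(\frac1\delta)}\right)\frac{T}{2}\ \ge\ \left(1-e^{-\log(\frac1\delta)}\right)\frac{T}{2}\ =\ \frac{1-\delta}{2}\,T,
\]
where the middle step uses $v^\star\ge\opt$ so that $\tfrac{v^\star}{\opt}\ge 1$. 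Hence $\s_{v^\star}$ is a feasible candidate in the selection rule on Line~\ref{best-can} (it meets the $f(\s_v)\ge (1-\delta)T/2$ threshold), and its support size is $(1+\log(\tfrac1\delta))v^\star\le (1+\epsilon)(1+\log(\tfrac1\delta))\opt$, which — reconciling with the abstract's $(1+\epsilon)(1+\ln(\tfrac1\delta))$ form — gives the claimed $\delta_1$ bound. Since the algorithm returns the feasible candidate of largest $f$-value (equivalently, by the table, smallest support among feasible ones), the output $\s$ satisfies $f(\s)\ge f(\s_{v^\star})\ge \tfrac{1-\delta}{2}T$ and $|supp(\s)|\le |supp(\s_{v^\star})|\le (1+\epsilon)(1+\log(\tfrac1\delta))\opt$.

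Next I would handle the query complexity. There are $|O|=O(\log_{1+\epsilon} n)=O(\tfrac{\log n}{\epsilon})$ guesses (using $\log(1+\epsilon)=\Theta(\epsilon)$ for $\epsilon\in(0,1)$). Each call $\FBo(V,f,k,v,\delta)$ costs $O\!\big(kn\log(v\log(\tfrac1\epsilon))\log(\tfrac n\delta)\big)=O\!\big(kn\log(n)\log(\tfrac n\delta)\big)$ by Theorem~\ref{theo:fba-opt}, since $v\le n$ and $\log(\tfrac1\epsilon)\le\poly(n)$. Multiplying, the total is $O\!\big(\tfrac{\log n}{\epsilon}\cdot kn\log(n)\log(\tfrac n\delta)\big)=O\!\big(\tfrac{nk\log^2 n}{\epsilon}\log(\tfrac n\delta)\big)$, matching the stated bound.

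The subtlety I expect to be the main obstacle is the probability bookkeeping across the $|O|$ independent invocations of $\FBo$. Theorem~\ref{theo:fba-opt} only promises success probability $1-\delta$ per call, so a naive union bound over all $O(\tfrac{\log n}{\epsilon})$ guesses would degrade the overall guarantee to $1-O(\tfrac{\delta\log n}{\epsilon})$. The clean fix is to note we do \emph{not} need every call to succeed — we only need the single call at $v=v^\star$ to succeed, because the selection rule is a deterministic post-processing step: any $\s_v$ that happens to violate its intended guarantee is either still feasible (and then it only helps, since we take the max-$f$ feasible candidate) or infeasible (and then it is simply discarded). So conditioning on the event that the $v^\star$-call succeeds — which has probability $\ge 1-\delta$ by itself — suffices to establish both the feasibility $f(\s)\ge\tfrac{1-\delta}{2}T$ and the size bound $|supp(\s)|\le(1+\epsilon)(1+\log(\tfrac1\delta))\opt$ for the returned $\s$. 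I would state this reduction explicitly: the whole-algorithm success event contains the single-call success event for $v^\star$, hence has probability at least $1-\delta$, and on that event the two bicriteria inequalities hold, completing the proof.
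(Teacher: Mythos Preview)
Your proposal follows exactly the paper's route: pick the geometric guess $v^\star\in O$ with $\opt\le v^\star\le(1+\epsilon)\opt$, apply Theorem~\ref{theo:fba-opt} at that guess to get both the $f$-value bound (via $v^\star/\opt\ge 1$) and the support bound (via $v^\star\le(1+\epsilon)\opt$), and multiply the per-call query cost by $|O|=O(\tfrac{\log n}{\epsilon})$. Your explicit handling of the probability---observing that only the single $v^\star$-call must succeed, so no union bound over $|O|$ is needed---is in fact more careful than the paper's own proof, which simply asserts ``with probability at least $1-\delta$'' for that one call and leaves the selection-rule step to a one-line remark.
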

	\begin{proof}
		Algorithm~\ref{alg:fb} calls $\FBo$ $|O|$ times, so its query complexity is:
		\begin{align}
		&	\sum_{v \in O}O\left( k n \log (v \log(\frac{1}{\epsilon}))\log(\frac{n}{\delta})\right) 
		\\
		& \leq  O\left( k n \log_{1+\epsilon}(n) \log (n \log(\frac{1}{\epsilon}))\log(\frac{n}{\delta})\right)
		=
		O\left( \frac{ k n\log^2(n)}{\epsilon} \log(\frac{n}{\delta})\right).
		\end{align}
		Since $1\leq \opt \leq n$, there exist a value of $j \in \mathbb{N}$ so that $\opt \leq v=(1+\epsilon)^j\leq \opt(1+\epsilon)$. By the theoretical bounds of $\FBo$ (Theorem~\ref{theo:fba-opt}) we have
		\begin{align}
		|supp(\s_v)| = \frac{v}{2}\log(\frac{1}{\epsilon})\leq \frac{1+\epsilon}{2}\log(\frac{1}{\epsilon})\opt.
		\end{align}
		And with probability at least $1-\delta$, 
		\begin{align}
		f(\s)&\geq (1-e^{-\frac{v}{\opt}\log(\frac{1}{\delta})})\frac{T}{2}
		\\
		& \geq (1-e^{-\frac{v}{\opt}\log(\frac{1}{\delta})})\frac{T}{2}\ \ \  \mbox{(Since $v \ge \opt$)}
		\\
		& \geq  \frac{(1-\delta)T}{2}.
		\end{align}
		By the searching rule of the algorithm, it always  finds a solution that satisfies the desired theoretical bounds.
	\end{proof}
	\section{Experiments}
	\label{sec:exp}
	We conduct an experimental comparison of the our $\FB$ algorithm against the GREEDY and STREAMING, the sate-of-the-art ones in~\cite{soict-kcover}, using three evaluation criteria: utility value (value of $f$), number of queries, and solution size. The experiments are performed on the \textbf{$k$-type Product Revenue cover} application with two datasets: a randomly generated Erdős–Rényi (ER) graph $(n=2000, k=3)$ with edge probability \(p=0.01\), and the Email network\footnote{\url{https://snap.stanford.edu/data/email-Eu-core.html}} $(n=1005, k=5)$.
	Given a ground set of $V$ of $n$ user in a social network. Given a $k$-set $\s=(S_1, S_2, \ldots, S_k) \in (k+1)^V$, we consider 
	the utility function as an extended revenue value in \cite{kuhnle_quickksubmono} is defined as
	$
	f(\s) = \sum_{u \in V} \sum_{i=1}^k \left( \sum_{v \in S_i} w_{uv} \right)^{\alpha_{u,i}},
	$
	where \(\alpha_{u,i} \in (0,1)\) models the sensitivity of customer \(u\) to product \(i\). $f$ is monotone and  $k$-submodular over $V$.
	For all experiments, the parameters are set to \(\epsilon = 0.1\) and \(\delta = 0.1\).
	\paragraph{Experiment Results.} 
	\textbf{Size of solution (Fig.~\ref{fig:exp}a,d).} FastSG maintains compact solutions, with the solution size growing linearly with the threshold $T$. On ER, FastSG produces solutions of size 8--27, consistently smaller than GREEDY (18--62) and close to STREAMING (18--29). On Email, $\FB$ requires only 1--4 elements to reach feasibility, while GREEDY needs up to 10 and STREAMING as many as 30. This demonstrates that FastSG achieves more compact solutions while still ensuring validity. \textbf{Number of queries (Fig.~\ref{fig:exp}b,e).} FastSG demonstrates a significant advantage in reducing the number of function queries. On ER, the number of queries required by FastSG is approximately three to four times smaller than GREEDY and comparable to or lower than STREAMING. On Email, the advantage is even clearer: $\FB$ requires only $10^3\!-\!10^4$ queries, while GREEDY and STREAMING need $10^4\!-\!10^5$. This highlights the superior efficiency of $\FB$ in terms of computational cost. \textbf{Running time (Fig.~\ref{fig:exp}c,f).} $\FB$ clearly outperforms both GREEDY and STREAMING in terms of execution time. On ER, the running time of $\FB$ ranges from 118 to 1543 seconds, significantly lower than GREEDY (409--3305 seconds) and STREAMING (1365--1679 seconds). On Email, FastSG requires only 1--15 seconds, while GREEDY takes 8--87 seconds and STREAMING exceeds 370--600 seconds. Overall, $\FB$ achieves speedups of about 3–12$\times$ over GREEDY and up to 40$\times$ over STREAMING, highlighting its superior efficiency and suitability for time-sensitive applications.
	\begin{figure}[h!]
		\centering
		\includegraphics[width=0.32\linewidth]{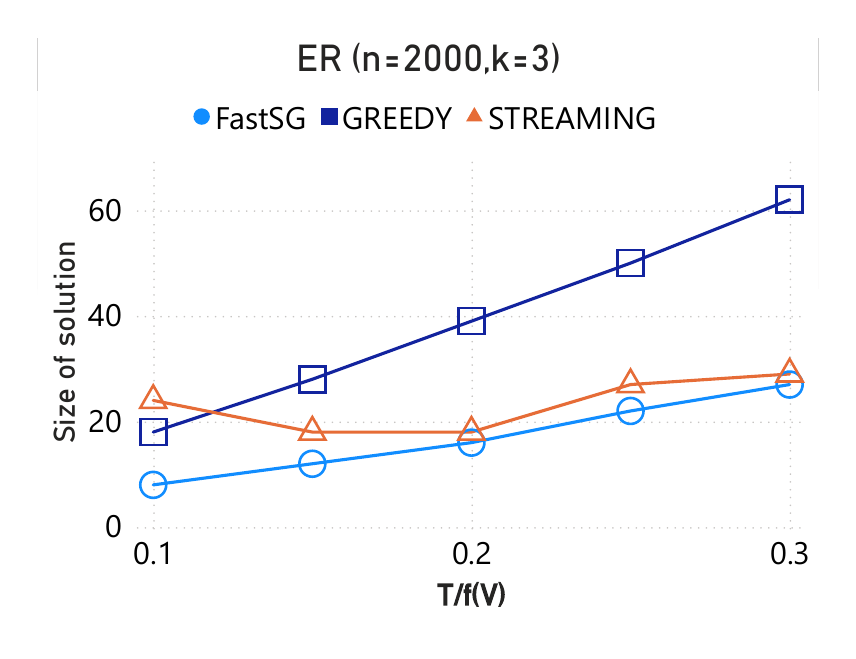}
		\includegraphics[width=0.32\linewidth]{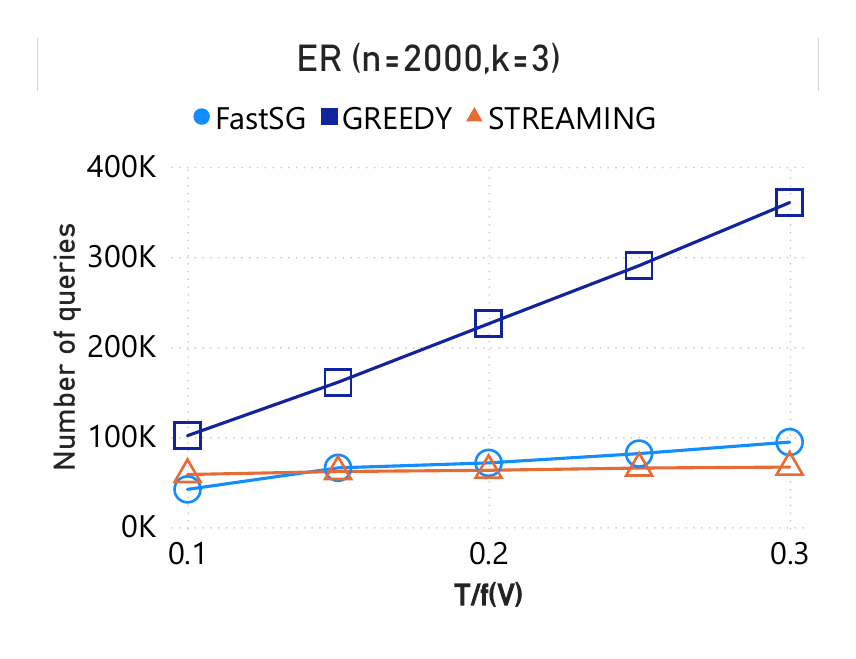}
		\includegraphics[width=0.32\linewidth]{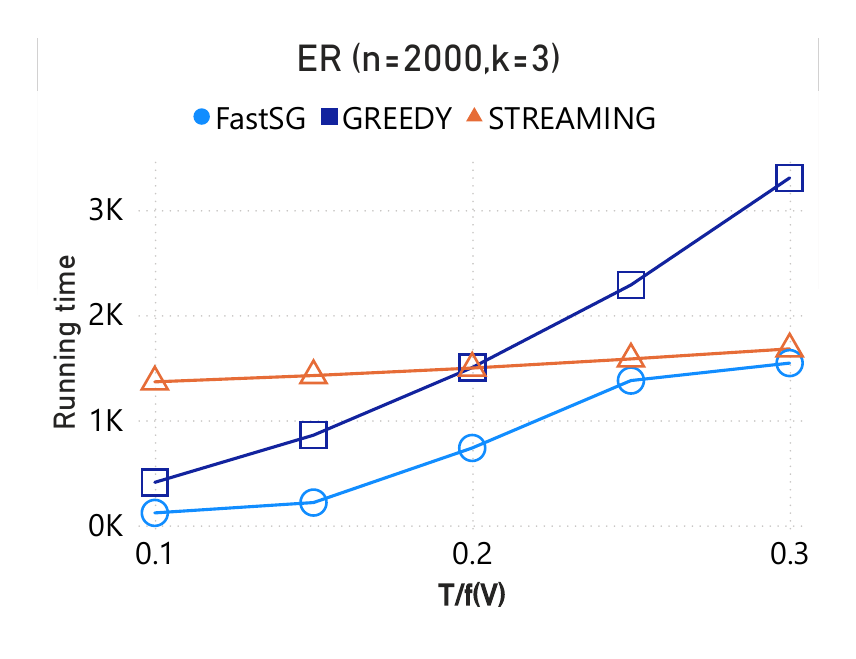}\\
		(a) \hspace{3.5cm} (b) \hspace{3.5cm} (c)\\
		\includegraphics[width=0.32\linewidth]{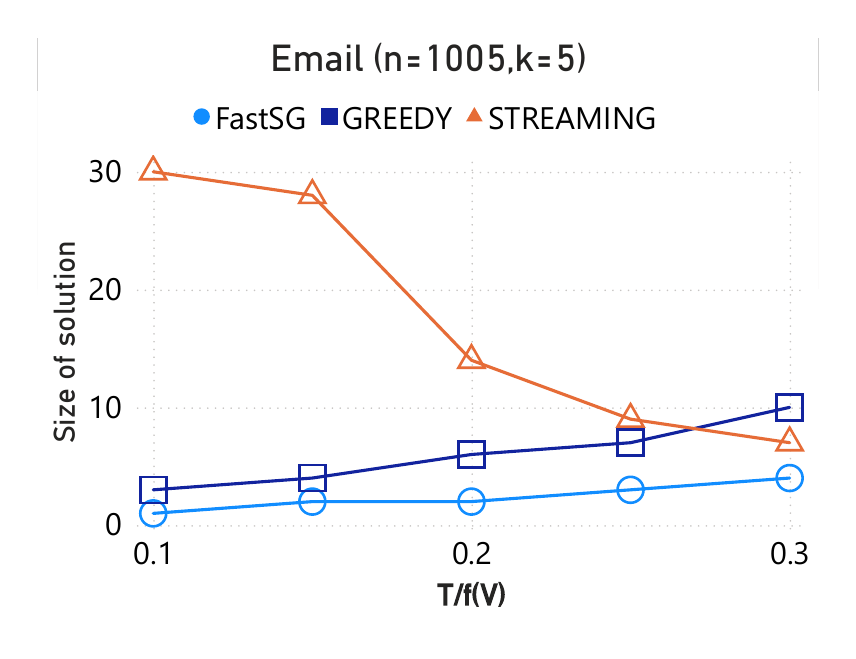}
		\includegraphics[width=0.32\linewidth]{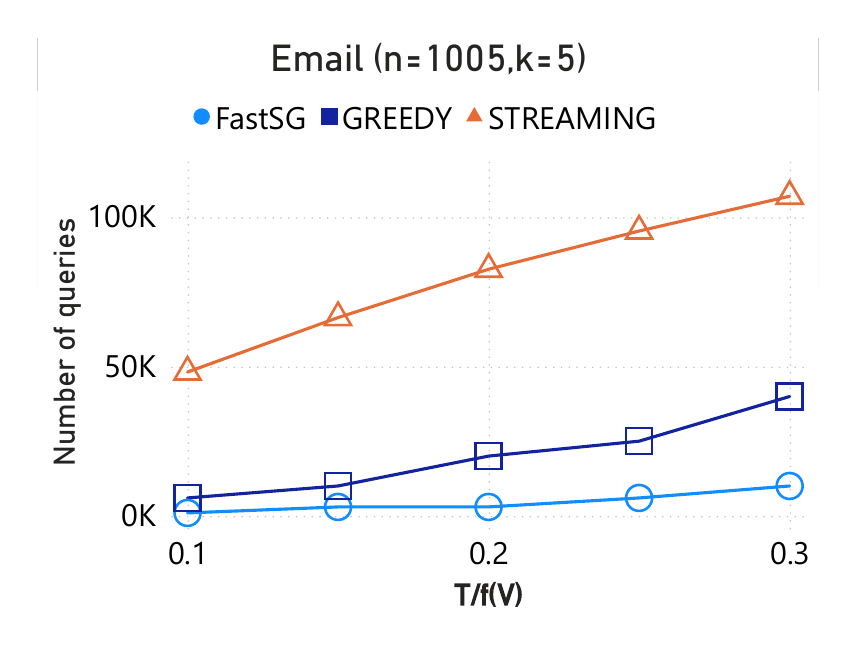}
		\includegraphics[width=0.32\linewidth]{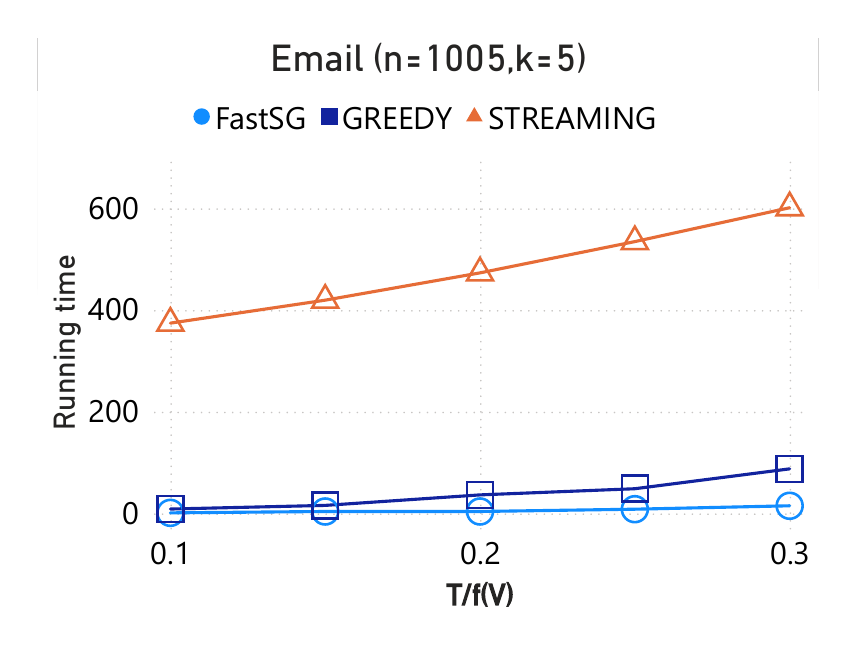}
		(d) \hspace{3.5cm} (e) \hspace{3.5cm} (f)\\
		\caption{Performance comparison (function value, query complexity, and size of solution) between FastSG and the state-of-the-art baselines GREEDY and STREAMING.}
		\label{fig:exp}
	\end{figure}
	\section{Conclusions}
	\label{sec:con}
	In this work, we have proposed the $\FB$ algorithm, which achieves the best-known bicriteria approximation guarantees for the $\kSC$ problem while requiring significantly lower query complexity. 
	Our $\FB$ algorithm substantially improves upon existing methods in both theoretical guarantees and practical performance. 
	An important open question remains: Can one design a bicriteria approximation algorithm that surpasses our result? 
	We leave this as a promising direction for future research.
	\bibliographystyle{splncs04}
	\bibliography{kSub-ref}
\end{document}